\newtheorem{theorem}{Theorem}
\newtheorem{definition}[theorem]{Definition}
\newtheorem{lemma}[theorem]{Lemma}
\theoremstyle{definition}
\theoremstyle{remark}
\newtheorem{example}[theorem]{Example}
\newcommand* \diag {\mathop{\rm diag}\nolimits }
\newcommand* \ca {\mathop{\rm ca}\nolimits }
\newcommand* \cba {\mathop{\rm ca}\nolimits }
\newcommand* \cone {\mathop{\rm cone}\nolimits }
\newcommand* \conv {\mathop{\rm conv}\nolimits }
\newcommand*  \dd {\mathop{\ \rm d}\nolimits}
\newcolumntype{x}[1]{>{\centering\arraybackslash}p{#1}}
\newcolumntype{C}[1]{>{\centering\let\newline\\\arraybackslash\hspace{0pt}}m{#1}}
\def\R{\mathbb{R}}
\def\N{\mathbb{N}}
\def\A{\mathcal{A}}
\def\M{\mathcal{M}}
\def\O{\Omega}
\def\o{\omega}
\def\:={\mathrel{\mathop:}=}
\title{Consistent Beliefs without Common Prior\thanks{We thank Jack Stecher and participants at SAET 2022, SOR 2023, the 14th Annual Financial Market Liquidity Conference, EWET 2024, RUD 2025, and SAET 2025 for their helpful comments and suggestions. Hellman gratefully acknowledges the support by Israel Science Foundation grants 448/22 and 458/25. Pintér acknowledges the support by the Hungarian Scientific Research Fund under the project K 146649 and by the HUN-REN under the project TKCS-2024/34.}}
\author{Ziv Hellman\thanks{Department of Economics, Bar-Ilan University, ziv.hellman@biu.ac.il.} \, and Mikl\'os Pint\'er\thanks{Corresponding author: Corvinus University of Budapest and HUN-REN-BME-BCE Quantum Technology Research Group, pmiklos@protonmail.com.}}
\begin{document}
	
\maketitle
	
\begin{abstract}
In a strand of the literature, it is assumed that the common prior has full support; that is, every type of every player is assigned positive probability. \citet{Morris1991,Morris1994} established an epistemological–behavioral duality characterisation of the common prior with full support, showing that a finite type space admits such a prior if and only if it contains no acceptable bet. This result forms the basis of the present paper.

The paper makes three contributions: (1) The characterisation of \citet{Morris1991,Morris1994} is extended to infinite type spaces. (2) The extension is robust: it does not depend on whether the infinite model applies countably additive or purely additive probabilities as beliefs. (3) The analysis implies that the notion of a real common prior—understood as a single probability distribution or a set of probability distributions—is not necessarily meaningful. 
		
\bigskip
		
\textit{Keywords:} Type space, Information structure, Common prior, Consistencies of Beliefs, Prior, Bet
\end{abstract}

\section{Introduction}

There are two strands in the literature concerning the notion of a common prior. The first treats the common prior as a probability distribution over the state space that is consistent with, or serves as a prior for, the posteriors of all players (see \citet{Aumann1976,Samet1998,Feinberg2000}, among others).

The second strand adopts the same notion of the common prior as does the one in the previous paragraph, but further assumes that the prior has full support (see \citet{Morris1991,Morris1994,Morris2020,FrickIijimalshii2023,MorrisOyamaTakahashi2024}, among others). Here, full support means that every type of every player is assigned positive probability under the common prior. Henceforth, we refer to this as the \textit{common prior with full support} assumption.

At first glance, the distinction between a common prior and a common prior with full support may appear to be minor, but we will show here that this is not the case. 
In the epistemic–behavioral dual characterisation \citep{Samet1998,Morris1994}\footnote{The epistemic pillar is the notion of a common prior, and the behavioural pillar is the notion of a bet (or trade). Characterisation means the pillars are equivalent.}, the behavioural counterpart of a common prior is an \textit{agreeable bet} \citep{Samet1998,Feinberg2000}, where every player at every state expects a positive payoff. In contrast, the behavioural counterpart of a common prior with full support is an \textit{acceptable bet} \citep{Morris1994}, where all expected payoffs are non-negative and some are strictly positive.

While the concept of full support is meaningful in finite state spaces, it is less so in infinite ones. Our first result generalises the notion of common prior with full support to infinite state spaces via the epistemic–behavioral dual characterisation: we extend acceptable bets to infinite state spaces and prove a corresponding dual characterisation theorem (Theorem \ref{thm:notrade3}). The resulting generalised notion of a common prior we term \textit{strong consistency of beliefs} (Definition \ref{def:cp3}).

When generalising the notion of a common prior to infinite state spaces, a key question is which model to adopt. One approach treats beliefs and (common) priors as probability measures ($\sigma$-additive probabilities) \citep{MertensZamir1985,BrandenburgerDekel1993,Heifetz1993,Morris1994,HeifetzSamet1998,Feinberg2000,LehrerSamet2014}.
There are other approaches that models beliefs and priors as probability charges (additive probabilities) \citep{HellmanPinter2022}.

The common prior–agreeable bet dual characterisation fails in the probability measures model \citep{Feinberg2000}, unless one adds significant additional assumptions such as compactness or a countable state space \citep{Feinberg2000,LehrerSamet2014}.
However, but it generalises straightforwardly in the probability charges model \citep{HellmanPinter2022}.

The second result of this paper is that the generalisation of strong consistency of beliefs (common prior with full support) — the acceptable bet epistemic–behavioral dual characterisation — is robust to the choice of model. That is, it holds both for probability measures (Theorem \ref{thm:notrade3}) and for probability charges (Supplementary Material). 
It follows that the strong consistency of beliefs characterisation is less fragile than the common prior–agreeable bet dual characterisation -- a point in favour of the former.

With regards to the  concept of type spaces, these admit two interpretations. One views them as models of asymmetric information, with a concrete ex-ante stage from which players update their beliefs. The other treats them as tools for incomplete information, representing hierarchies of beliefs while the ex-ante stage is merely an abstraction.

In the asymmetric information case, priors are elicited via ex-ante bets and have clear meaning. In the incomplete information case, priors are purely analytical, and the ex-ante stage is at most a fiction.

Regardless of whether the ex-ante stage is regarded as concrete or merely fictional, the common prior—as a probability distribution—serves as a witness that players’ beliefs are consistent in the type space. Mathematically, an epistemic–behavioural dual characterisation relies on the separation of convex sets: strong separation corresponds to the common prior–agreeable bet characterisation, while proper separation corresponds to the strong consistency of beliefs (common prior with full support)–acceptable bet characterisation. Hence, a common prior, interpreted as a probability distribution or a set of distributions, can indicate the presence or absence of strong or proper separation.

The third result of this paper is illustrated in Examples \ref{exp:Feinbergv1} and \ref{exp:Feinbergv2}. These examples show that proper separation of convex sets may not be detected by the intersection of the sets. In finite-dimensional spaces, two convex sets are properly separated if and only if the intersection of their relative interiors is nonempty; an element of this intersection serves as a witness to proper separation. Each example features two players and two closed convex sets in the space of probability distributions with identical intersections. In one example, the sets are properly separable; in the other, they are not.

These examples imply that, in infinite state spaces, the common prior with full support cannot generally be represented by a probability distribution or a set of distributions. Consequently, not only may a concrete ex-ante stage be meaningless, but even a probability distribution as a convenient abstraction may fail to capture the notion. Building on this observation, we propose adopting the term (strong) consistency of beliefs instead of common prior (with full support). This result contributes to the debate initiated by \citet{Aumann1998} and \citet{Gul1998}.   

In summary, given the insights gained from infinite models, we prefer the term strong consistency of beliefs for  common prior with full support, and more generally advocate the use of consistency of beliefs over the term common prior.

\subsection{Outline of the Paper}

The structure of the paper is as follows. 
In the main section (Section \ref{sec:main}), we introduce the model, presenting the definitions of type space, bet, and acceptable bet, as well as a characterisation of acceptable bets using proper separation. This is followed by a presentation of the notion of strong consistency of beliefs (Definition \ref{def:cp3}).

In finite state spaces, there is a simple characterisation of strong consistency: beliefs there are strongly consistent if and only if the intersection of the relative interiors of the players' prior sets is non-empty, which essentially means that strong consistency can be witnessed by a probability distribution.
In general state spaces, this characterisation no longer holds. 
Instead, we present an epistemic–behavioral dual characterisation (Theorem \ref{thm:notrade3}), using the notion of acceptable bets.

Section \ref{sec:exp} presents the examples discussed above. The final section provides a brief conclusion. In Appendix \ref{app}, we collect the technical results needed for the paper, while the Supplementary Material considers these results in the context of a model with probability charges.

\section{Main result}\label{sec:main}

We begin by introducing the notion of type space that we use in this paper. This notion is a reformulation of that of \citet{HeifetzSamet1998}, which is considered the standard concept of type space.

\begin{definition}\label{def:ts}
A type space is a tuple $((\Omega,\mathcal{A}),\{(\Omega,\mathcal{M}_i)\}_{i \in N},\{t_i\}_{i \in N})$, where

\begin{itemize}
\item $N$ is the nonempty set of players,

\item $\Omega$ is the set of states of the world (the state space),

\item $\mathcal{M}_i$ is the class of events known to player $i$, called player $i$'s knowledge $\sigma$-field; it is a $\sigma$-field over $\Omega$ for each $i \in N$,

\item $\mathcal{A}$ is the $\sigma$-field of all considered events, satisfying $\mathcal{M}_i \subseteq \mathcal{A}$ for all $i \in N$,

\item for each player $i \in N$, $t_i \colon \Omega \times \mathcal{A} \to [0,1]$ is a type function satisfying:

\begin{enumerate}
\item $t_i(\omega, \cdot) \in \Delta(\Omega, \mathcal{A})$ represents the belief of player $i$ at state $\omega$, where $\Delta$ denotes the set of probability measures, for all $\omega \in \Omega$,

\item $t_i(\cdot, E)$ is $\mathcal{M}_i$-measurable; that is, player $i$ knows her own belief about the event $E$, for all $E \in \mathcal{A}$,

\item for each $E \in \mathcal{M}_i$ and each $\omega \in E$, $t_i(\omega,E) = 1$; that is, player $i$ assigns probability 1 to events she knows.
\end{enumerate}
			
\end{itemize}
\end{definition}

The behavioural pillar of our analysis is the notion of bet (also called trade, see e.g. \citet{Morris1994}):

\begin{definition}\label{def:trade1}
A \emph{bet} $f=(f_{i})_{i \in I}$, where $I \subseteq N$, $| I | < \infty$, is a family of functions, one for each player $i \in I$, such that $f_i$ is bounded and $\mathcal{A}$-measurable for all $i$, and furthermore satisfies

\begin{equation*}
\sum_{i \in I} f_i(\omega)  = 0 \, 
\end{equation*}

\noindent for every $\o \in \O$.
\end{definition}

The notion of acceptable bet (called acceptable trade in \citet{Morris1994}) can be generalised from the finite state space setting to the infinite state space setting as follows:

\begin{definition}\label{def:trade5}
A bet $f=(f_{i})_{i \in I}$ is \emph{acceptable} if 
\begin{equation*}
\int f_i \dd t_i (\o,\cdot) \geq 0 \mspace{15mu} 
\end{equation*}
\noindent \text{for all $i \in I$ at every state of the world $\o \in \O$}, and there exists a state of the world $\o^\ast \in \O$ and player $i^\ast \in I$ such that 
\begin{equation*}
\int f_{i^\ast} \dd t_{i\ast} (\o^\ast,\cdot) > 0 \, .
\end{equation*}
\end{definition}
	
In words, a bet is acceptable if, at every state of the world, it is commonly certain that no player incurs a loss, while there exists at least one state at which some player expects a strictly positive payoff.
	
We can reformulate the notion of acceptable bet (Definition \ref{def:trade5}) as follows:
	
\begin{lemma}\label{def:trade6}
A bet $f=(f_{i})_{i \in I}$ is \emph{acceptable} if and only if there exists $i^\ast \in I$ such that $(f_i)_{i \in I \setminus \{i^\ast\}}$ properly separates the sets $\diag \Pi_{i^\ast}^{|I|-1}$ and $\cone (\prod_{i \in I \setminus \{i^\ast\}} \Pi_i)$, where
\begin{equation*}
\Pi_i = \overline{ \left \{ \mu \in \Delta (\O,\A) \colon \mu (E \cap F) = \int \limits_F t_i (\cdot,E) \dd \mu,\ \forall E \in \A,\ F \in \mathcal{M}_i \right \}}^\ast \, 
\end{equation*}

\noindent for all $i \in I$, here $^\ast$ denotes the weak* closure, $\diag$ denotes the diagonal and $\cone$ denotes the cone hull. 

$\Pi_i$ is called \emph{the set of player $i$'s priors}. 
\end{lemma}

As a reminder, here we give the definitions of proper and strong separation of convex set. Given two convex sets $C_1,C_2 \subseteq X$. Then a linear functional $f \colon X \to \R$ properly separates $C_1$ and $C_2$, if there exists $c \in \R$ such that $f (x) \geq c$ for all $x \in C_1$, $f (x) \leq c$ for all $x \in C_2$ and $f$ is not $0$ over $C_1 \cup C_2$, that is, there exists $x \in C_1 \cup C_2$ such that $f (x) \neq 0$. Furthermore, $f$ strongly separates $C_1$ and $C_2$, if there exist $c \in \R$ and $\varepsilon > 0$ such that $f (x) \geq c + \varepsilon$ for all $x \in C_1$, and  $f (x) \leq c - \varepsilon$ for all $x \in C_2$.
	
\begin{proof}
If: Suppose that $(f_i)_{i \in I \setminus \{i^\ast\}}$, where $(f_i)_{i \in I \setminus \{i^\ast\}}$ is the product of the bets $f_i$, $i \in I \setminus \{i^\ast\}$, properly separates the sets $\diag \Pi_{i^\ast}^{|I|-1}$ and $\cone (\prod_{i \in I \setminus \{i^\ast\}} \Pi_i)$ for some $i^\ast \in I$. Then without loss of generality, $\sum_{i \in I \setminus \{i^\ast\}} f_i (x_i) \geq 0$ for every $(x_i)_{i \in I \setminus \{i^\ast\}} \in \cone (\prod_{i \in I \setminus \{i^\ast\}} \Pi_i)$, where $ f_i (x_i) = \int f_i \dd x_i$ ($x_i$ is a probability measure), and $\sum_{i \in I \setminus \{i^\ast\}} f_i (x) \leq 0$ for every $x \in \Pi_{i^\ast}$, and there exists $(x_i)_{i \in I \setminus \{i^\ast\}} \in \diag \Pi_{i^\ast}^{|I|-1} \cup \, \cone (\prod_{i \in I \setminus \{i^\ast\}} \Pi_i)$ such that $\sum_{i \in I \setminus \{i^\ast\}} f_i (x_i) \neq 0$.
		
Since $(x_i,0_{j \in I \setminus \{i^\ast,i\}}) \in \cone (\prod_{j \in I \setminus \{i^\ast\}} \Pi_j)$ for each $i \in I \setminus \{i^\ast\}$ and $(x_j)_{j \in I \setminus \{i^\ast\}} \in \cone (\prod_{j \in I \setminus \{i^\ast\}} \Pi_j)$, we have that $f_i (x_i) \geq 0$ for every $i \in I \setminus \{i^\ast\}$ and $x_i \in \Pi_i$, that is, $\int f_i \dd t_i (\o,\cdot) \geq 0$ for every $i \in I \setminus \{i^\ast\}$ and $\o \in \O$. 
		
Moreover, let $f_{i^\ast} = - \sum_{i \in I \setminus \{i^\ast\}} f_i$. 
Then $f_{i^\ast} (x_{i^\ast}) \geq 0$ for every $x_{i^\ast} \in \Pi_{i^\ast}$, that is, $\int f_{i^\ast} \dd t_{i^\ast} (\o,\cdot) \geq 0$ for every $\o \in \O$. Furthermore, there exists $\o \in \O$ and $i \in I$ such that $\int f_{i} \dd t_{i} (\o,\cdot) > 0$.
		
\bigskip
		
Only if:  Suppose that $(f_i)_{i \in I}$ is an acceptable bet, and let $i^\ast \in I$ be such that $\int f_{i^\ast} \dd t_{i^\ast} (\o,\cdot) > 0$ for some $\o \in \O$. 
Then $\int f_i \dd t_i (\o,\cdot) \geq 0$ for every $i \in I$ and $\o \in \O$, that is, $f_i (x_i) \geq 0$ for every $i \in I$ and $x_i \in \Pi_i$. 

Therefore, $(f_i)_{i \in I \setminus \{i^\ast\}} ((x_i)_{i \in I \setminus \{i^\ast\}}) = \sum_{i \in I \setminus \{i^\ast\}} f_i (x_i) \geq 0$ for every $(x_i)_{i \in I \setminus \{i^\ast\}} \in \cone (\prod_{i \in I \setminus \{i^\ast\}} \Pi_i)$; and $(f_i)_{i \in I \setminus \{i^\ast\}} (x) = \sum_{i \in I \setminus \{i^\ast\}} f_i (x_i) = - f_{i^\ast} (x_j)  \leq 0$ for every $x \in \diag \Pi_{i^\ast}^{|I|-1}$ and $j \in I \setminus \{i^\ast\}$. 

Moreover, there exists $x_{i^\ast} \in \Pi_{i^\ast}$ such that $\sum_{i \in I \setminus \{i^\ast\}} f_i (x_{i^\ast}) = - f_{i^\ast} (x_i) \neq 0$; meaning there exists $(x_i)_{i \in I \setminus \{i^\ast\}} \in \diag \Pi_{i^\ast}^{|I|-1} \cup \, \cone (\prod_{i \in I \setminus \{i^\ast\}} \Pi_i)$ such that $(f_i)_{i \in I \setminus \{i^\ast\}} ((x_i)_{i \in I \setminus \{i^\ast\}}) \neq 0$. Therefore, $(f_i)_{i \in I \setminus \{i^\ast\}}$ properly separates the sets $\diag \Pi_{i^\ast}^{|I|-1}$ and $\cone (\prod_{i \in I \setminus \{i^\ast\}} \Pi_i)$.
\end{proof}

The concept of common prior with full support is problematic to define in infinite spaces.
We therefore present the following notion, strong consistency of beliefs, as our suggested infinite state space generalisation of the notion of common prior with full support:

\begin{definition}\label{def:cp3}
A set of players' beliefs in a type space $((\Omega,\mathcal{A}),\{(\Omega,\mathcal{M}_i)\}_{i \in N},\{t_i\}_{i \in N})$ are \emph{strongly consistent} if for every $I \subseteq N$, $|I| < \infty$, every player $i^\ast \in I$, every type $t_{i^\ast} (\o^\ast , \cdot)$ of player $i^\ast$, and every state of the world $\o^\ast \in \O$: 
\begin{equation*}
t_{i^\ast} (\o^\ast,\cdot) \in \overline{\cone (\prod_{i \in I \setminus \{i^\ast\}} \Pi_i) - \cone(\diag (\Pi_{i^\ast}^{-\o^\ast})^{|I| -1})}^\ast \,
\end{equation*}

%\begin{equation*}
%t_{i^\ast} (\o^\ast,\cdot) \in \overline{\cone (\{t_i (\o, \cdot) \colon \o \in \O \} - \{t_{i^\ast} (\o,\cdot) \colon \o \in \O \text{ s.t. } t_{i^\ast} (\o,\cdot) \neq t_{i^\ast} (\o^\ast,\cdot)  \})}^\ast \, ,
%\end{equation*}
		
\noindent for all $i \in I \setminus \{i^\ast\}$, where $\Pi_{i^\ast}^{-\o^\ast} = \overline{\{t_{i^\ast} (\o,\cdot) \colon \o \in \O \text{ s.t. } t_{i^\ast} (\o,\cdot) \neq t_{i^\ast} (\o^\ast,\cdot)  \})}^\ast$.
\end{definition}
	
Two remarks apply here. First, in a finite type space (finite state space setting) the players' beliefs are strongly consistent if and only if the intersection of the relative interiors of  the players' prior sets ($\Pi_i$) is nonempty. In other words, in the finite type space setting the strong consistency of the players' beliefs can be witnessed by a probability distribution over the state space.

Second, the notion of strong consistency of beliefs looks troubling: it is very strange and, at first glance, counterintuitive. However, this is unavoidable—strong consistency of beliefs must look troubling and must lack an intuitive common-prior like interpretation. As we will see in Examples \ref{exp:Feinbergv1} and \ref{exp:Feinbergv2}, strong consistency of beliefs generally cannot be described by a prior shared by the players. Instead, strong consistency of beliefs is characterized by the relative positions of the players’ priors, which is a technical feature. In other words, the common view that the consistency of players’ beliefs can be understood as arising from a (common) prior at an ex-ante stage, or even as being captured through aggregations of the players’ beliefs (their priors), cannot be sustained.

\begin{lemma}\label{lem:important0}
The players' beliefs in a finite type space $((\Omega,\mathcal{A}),\{(\Omega,\mathcal{M}_i)\}_{i \in N},\{t_i\}_{i \in N})$ are strongly consistent if and only if $\cap_{i \in N} \textup{int} \Pi_i \neq \emptyset$, where $\textup{int}$ denotes the relative interior.
\end{lemma}

\begin{proof}
We consider the two-player case; the proof of the general case follows analogously.

Let $N = \{1,2\}$ and $\mu \in \textup{int} \Pi_1 \cap \textup{int} \Pi_2$. Then there exist $n^1,n^2 \in \N$, and $(\alpha_{m^i}^i) \subseteq \R_{++}$, $m^i = 1,\ldots,n^i$ such that $\sum_{m^i=1}^{n^i} \alpha_{m^i}^i = 1$, $i=1,2$, and 

\begin{equation*}
\sum_{m^1=1}^{n^1} \alpha_{m^1}^1 t_1^{m^1} = \mu = \sum_{m^2=1}^{n^2} \alpha_{m^2}^2 t_2^{m^2} \, ,
\end{equation*}

\noindent where $t_i^{m^i} \in \{ t_i (\o,\cdot) \colon \o \in \O\}$, and $n^i = | \{ t_i (\o,\cdot) \colon \o \in \O\} |$, $i=1,2$.

Then for any $t_1^{m'^1}$ ($t_2^{m'^2}$), 
\begin{equation}\label{eq1}
\begin{split}
t_1^{m'^1} = \dfrac{1}{\alpha_{m'^1}^1} \left( \sum \limits_{m^2=1}^{n^2} \alpha_2^{m^2} t_2^{m^2}   - \sum \limits_{m^1 \in \{1,\ldots,n^1\} \setminus \{m'^1\}} \alpha_1^{m^1} t_1^{m^1} \right)  \\ 
\in \cone ((\{t_2^{m^2} \colon m^2 = 1,\ldots,n^2 \} - \{t_1^{m^1} \colon m^1 \in \{1,\ldots,n^1\} \setminus \{m'^1\} \})
\end{split}
\end{equation}

\begin{equation}\label{eq2}
\left(
\begin{split}
t_2^{m'^2} = \dfrac{1}{\alpha_{m'^2}^2} \left( \sum \limits_{m^1=1}^{n^1} \alpha_1^{m^1} t_1^{m^1}   - \sum \limits_{m^2 \in \{1,\ldots,n^2\} \setminus \{m'^2\}} \alpha_2^{m^2} t_2^{m^2} \right)  \\ 
\in \cone ((\{t_1^{m^1} \colon m^1 = 1,\ldots,n^1 \} - \{t_2^{m^2} \colon m^2 \in \{1,\ldots,n^2\} \setminus \{m'^2\} \})
\end{split}
\right)
\end{equation}

In words, the intersection of the relative interiors of $\Pi_1$ and $\Pi_2$ is nonempty if and only if  for any $t_1^{m'^1}$ ($t_2^{m'^2}$) \eqref{eq1} (\eqref{eq2}) holds; that is, if and only if the beliefs of the players are strongly consistent. 
\end{proof}

The characterisation of strong consistency in Lemma \ref{lem:important0} does not extend to infinite state spaces. We therefore work instead with an infinite state space generalisation of the concept of strong consistency of beliefs.
The following theorem presents the appropriate infinite state space generalisation of the strong consistency of beliefs–acceptable bet epistemic–behavioral dual characterisation of \citet{Morris1994}. It states that the players’ beliefs in a type space are strongly consistent if and only if there exists no acceptable bet within the type space.
	
\begin{theorem}\label{thm:notrade3}
Let $T = ((\O,\A),(\O,\M_i)_{i \in N},(t_i)_{i \in N})$ be a type space. Then only one of the following two statements can obtain: 
		
\begin{itemize}
\item The players' beliefs in the type space are strongly consistent.
			
\item There exists an acceptable bet within the type space.
\end{itemize}
\end{theorem}

\begin{proof}%[The proof of Theorem \ref{thm:notrade3}]
Suppose that the players' beliefs in $T$ are strongly consistent, and suppose by contradiction that $f = (f_i)_{i \in I}$ is an acceptable bet. Let $i^\ast \in I$ and $\o^\ast \in \O$ be such that $\int f_{i^\ast} \dd t_{i^\ast} (\o^\ast,\cdot) > 0$. Then $(f_i)_{i \in I \setminus \{i^\ast\}} \colon \ca (\O,\A)^{I \setminus \{i^\ast\}} \to \R$ properly separates $\overline{\cone(\diag \Pi_{i^\ast}^{| I | -1})}^\ast$ and $\overline{\cone(\prod_{i \in I \setminus \{i^\ast\}} \Pi_i)}^\ast$; and%. Therefore, by Theorem \ref{thm:sep} we have that

\begin{equation*}
t_{i^\ast} (\o^\ast,\cdot) \notin \overline{\cone (\prod_{i \in I \setminus \{i^\ast\}} \Pi_i) - \cone(\diag (\Pi_{i^\ast}^{-\o^\ast})^{|I| -1})}^\ast \, ,
\end{equation*}

\noindent which is a contradiction, where $\ca$ denotes the bounded countable additive set functions (bounded signed measures).
		
\bigskip
		
Suppose that there is no acceptable bet within a type space. By Lemma \ref{def:trade6} this means that $\diag \Pi_{i^\ast}^{|I| - 1}$ and $\cone(\prod_{i \in I \setminus \{i^\ast\}} \Pi_i)$ cannot be properly separated for every $I \subseteq N$, $| I | < \infty$, $i^\ast \in I$, that is, the sets $\overline{\cone(\diag \Pi_{i^\ast}^{|I| -1})}^\ast$ and $\overline{\cone(\prod_{i \in I \setminus \{i^\ast\}} \Pi_i)}^\ast$ cannot be properly separated either. 
Then, by Theorem \ref{thm:sep} we have that for every $I \subseteq N$, $|I| < \infty$, $i^\ast \in I$ and $\o^\ast \in \O$,
\begin{equation*}
t_{i^\ast} (\o^\ast,\cdot) \in \overline{\cone (\prod_{i \in I \setminus \{i^\ast\}} \Pi_i) - \cone(\diag (\Pi_{i^\ast}^{-\o^\ast}))}^\ast \, .
\end{equation*}
		
\noindent Therefore, the beliefs of the players are strongly consistent in the type space.
\end{proof}

\section{Examples}\label{sec:exp}

The following two examples illustrate that the intersection of two closed convex sets may fail to witness their proper separation. This observation highlights a subtle geometric limitation underlying the epistemic–behavioral dual characterisation: in infinite-dimensional settings, proper separation cannot, in general, be represented by a single probability distribution or even by a set of distributions lying in the intersection. Consequently, these examples suggest that there may be no truly ‘elegant’ a distribution-based definition of strong consistency of beliefs.

\begin{example}\label{exp:Feinbergv1}
Consider the following type space which is an uncountable version of an example due to Yossi Feinberg (Figure 2 on p. 152 in \citet{Feinberg2000}).
				
The type space $((\Omega,\mathcal{A}),\{(\Omega,\mathcal{M}_i)\}_{i \in N},\{t_i\}_{i \in N})$ is as follows:
		
\begin{itemize}
\item $N = \{\text{Anne},\text{Ben}\}$,
			
\item $\O =\N \times [0,1]$, and $(m,x) \leq (n,y)$ if $x < y$ or $x = y$ and $m \leq n$. Then $(\O,\leq)$ is (linearly) ordered set,
			
\item $\A = \{ A \in \mathcal{P} (\O) \colon A  \text{ is countable or } A^\complement \text{ is countable}\}$, that is, $\A$ is the so called countable, co-countable $\sigma$-algebra,
			
\item $\M_{\text{Anne}}$ is the $\sigma$-field generated by the partition $\{\{(0,x)\},\{(1,x),(2,x)\},$ $\{(3,x),(4,x)\},\ldots\}$, $x \in [0,1]$, and $\M_{\text{Ben}}$ is the $\sigma$-field generated by the partition $\{\{(0,x),(1,x)\},\{(2,x),(3,x)\},\{(4,x),(5,x)\},\ldots\}$, $x \in [0,1]$,
			
\item for each $x \in [0,1]$: 

$t_{\text{Anne}} ((m,x),\cdot)= 
\begin{cases}
\delta_{(0,x)} & \text{if } m=0, \\
\frac{2}{3} \delta_{(m,x)} + \frac{1}{3} \delta_{(m+1,x)} & \text{if } m =1 + \sum_{k=1}^n 2^{k+1},\ \text{for some } n \in \N, \\
\frac{2}{3} \delta_{(m-1,x)} + \frac{1}{3} \delta_{(m,x)} & \text{if } m = \sum_{k=1}^n 2^{k+1},\ \text{for some } n \in \N, \\
\frac{1}{2} \delta_{(m,x)} + \frac{1}{2} \delta_{(m+1,x)} & \text{if } m \neq 1 + \sum_{k=1}^n 2^{k+1},\ \text{for every } n \in \N, \\
\frac{1}{2} \delta_{(m-1,x)} + \frac{1}{2} \delta_{(m,x)} & \text{if } m \neq \sum_{k=1}^n 2^{k+1},\ \text{for every } n \in \N , \\
\end{cases}$ and $t_{\text{Ben}} ((m,x), \cdot) = 
\begin{cases}
\frac{1}{2} \delta_{(m,x)} + \frac{1}{2} \delta_{(m+1,x)} & \text{if } m = 2n, \text{for some } n \in \N, \\
\frac{1}{2} \delta_{(m-1,x)} + \frac{1}{2} \delta_{(m,x)} & \text{if } m = 2n+1,\ \text{for some } n \in \N , \\
\end{cases}$
\end{itemize}
		
In this type space $\Pi_{\text{Anne}} \cap \Pi_{\text{Ben}} = \{\nu\}$, where

%the players' beliefs are consistent, and there is only one witness of this consistency: the uniform distribution, that is, the probability charge
		
\begin{equation*}
\nu (A) = 
\begin{cases}
0 & \text{if } | A | \text{ is countable} , \\
1 & \text{otherwise}. 
\end{cases}
\end{equation*}
		
%Consider $t_{\text{Anne}} (\o_1,\cdot)$. By Theorem \ref{thm:sep} if the point $t_{\text{Anne}} (\o_1,\cdot)$ and the cone $\overline{\cone (\{t_{\text{Ben}} (\o,\cdot) \colon \o \in \O\} - \{t_{\text{Anne}} (\o,\cdot) \colon \o \in \O \setminus \{\o_1\} \})}^\ast$ can be separated then $\Pi_{\text{Anne}}$ and $\Pi_{\text{Ben}}$ can be properly separated. We therefore need to check whether $\Pi_{\text{Anne}}$ and $\Pi_{\text{Ben}}$ can be properly separated or not.
		
Suppose that $f$ properly separates $\Pi_{\text{Anne}}$ and $\Pi_{\text{Ben}}$. Without loss of generality we can assume that $f (0,0) = 1$. Then by that $\int f \dd t_{\text{Ben}} ((0,0),\cdot) \leq 0$ we have that $f (1,0) \leq -1$. Then by that $\int f \dd t_{\text{Anne}} ((1,0),\cdot) \geq 0$ we have that $f (2,0) \geq 2$, and so on. Therefore, we get $f (7,0) \geq 4$, $f (15,0) \geq 8$, and so on, that is, $f$ is not bounded, hence it cannot be a separator; meaning $\Pi_{\text{Anne}}$ and $\Pi_{\text{Ben}}$ cannot be properly separated. 

Therefore (see Theorem \ref{thm:sep}), 
\begin{equation*}
t_{\text{Anne}} (\o_1,\cdot) \in \overline{\cone (\{t_{\text{Ben}} (\o,\cdot) \colon \o \in \O\} - \{t_{\text{Anne}} (\o,\cdot) \colon \o \in \O \setminus \{\o_1\}\})}^\ast \, .
\end{equation*}
		
By very similar reasoning one can show that the same  holds for every type of each player, therefore, the players' beliefs are strongly consistent. 
\end{example}
	
In words, in the example above, the players’ beliefs are strongly consistent -- or, equivalently, we can conclude that there is no acceptable bet within the type space.

Next, we slightly modify the example as follows:
	
\begin{example}\label{exp:Feinbergv2}
Consider the following type space which is a variant of the type space considered in Example \ref{exp:Feinbergv1} (which is based on Figure 2 on p. 152 in \citet{Feinberg2000})\footnote{The knowledge $\sigma$-fields and Ben type functions are the same in both examples, while in this type space Anne's belief is the uniform distribution at every state of the world.} .
				
The type space $((\Omega,\mathcal{A}),\{(\Omega,\mathcal{M}_i)\}_{i \in N},\{t_i\}_{i \in N})$ is as follows:
\begin{itemize}
\item $N = \{\text{Anne},\text{Ben}\}$,
			
\item $\O =\N \times [0,1]$, and $(m,x) \leq (n,y)$ if $x < y$ or $x = y$ and $m \leq n$. Then $(\O,\leq)$ is (linearly) ordered set,
			
\item $\A = \{ A \in \mathcal{P} (\O) \colon A  \text{ is countable or } A^\complement \text{ is countable}\}$, that is, $\A$ is what is called a countable, co-countable algebra,
			
\item $\M_{\text{Anne}}$ is the $\sigma$-field generated by the partition $\{\{(0,x)\},\{(1,x),(2,x)\},$ $\{(3,x),(4,x)\},\ldots\}$, $x \in [0,1]$, and $\M_{\text{Ben}}$ is the $\sigma$-field generated by the partition $\{\{(0,x),(1,x)\},\{(2,x),(3,x)\},\{(4,x),(5,x)\},\ldots\}$, $x \in [0,1]$,

\noindent Therefore, so far this type space and the previous one (Example \ref{exp:Feinbergv1}) are the same. The difference comes in the beliefs:  
			
\item for each $x \in [0,1]$: 

$t_{\text{Anne}} ((m,x),\cdot)= 
\begin{cases}
\delta_{(0,x)} & \text{if } m=0, \\
\frac{1}{2} \delta_{(m,x)} + \frac{1}{2} \delta_{(m+1,x)} & \text{if } m = 2n+1, \text{for some } n \in \N, \\
\frac{1}{2} \delta_{(m-1,x)} + \frac{1}{2} \delta_{(m,x)} & \text{if } m = 2(n+1),\ \text{for some } n \in \N , \\
\end{cases}$ and $t_{\text{Ben}} ((m,x), \cdot) = 
\begin{cases}
\frac{1}{2} \delta_{(m,x)} + \frac{1}{2} \delta_{(m+1,x)} & \text{if } m = 2n, \text{for some } n \in \N, \\
\frac{1}{2} \delta_{(m-1,x)} + \frac{1}{2} \delta_{(m,x)} & \text{if } m = 2n+1,\ \text{for some } n \in \N , \\
\end{cases}$
\end{itemize}
		
In this type space, as in the previous example, $\Pi_{\text{Anne}} \cap \Pi_{\text{Ben}} = \{\nu\}$, where $\nu$ is the uniform distribution.
		
%As in the previous example consider $t_{\text{Anne}} (\o_1,\cdot)$. By Theorem \ref{thm:sep} if the point $t_{\text{Anne}} (\o_1,\cdot)$ and the cone $\overline{\cone (\{t_{\text{Ben}} (\o,\cdot) \colon \o \in \O\} - \{t_{\text{Anne}} (\o,\cdot) \colon \o \in \O \setminus \{\o_1\} \})}^\ast$ can be separated then $\Pi_{\text{Anne}}$ and $\Pi_{\text{Ben}}$ can be properly separated. We therefore need to check whether $\Pi_{\text{Anne}}$ and $\Pi_{\text{Ben}}$ can be properly separated or not.
		
Consider a function $f (m,0) = 
\begin{cases}
1 & \text{if } m \text{ is even}, \\
-1 & \text{if } m \text{ is odd}.
\end{cases}$, and $f(n,x) = 0$ for all $m \in \N$ and $x \in (0,1]$.

Then it is easy to check that $\int f \dd t_{\text{Anne}} ((m,x),\cdot) \geq 0$ and $\int f \dd t_{\text{Ben}} ((m,x),\cdot) \leq 0$ for all $m \in \N$ and $x \in [0,1]$, and $\int f \dd t_{\text{Anne}} ((0,0),\cdot) = 1$. Therefore, $f$ is bounded and $\A$-integrable, and it properly separates  $\Pi_{\text{Anne}}$ and $\Pi_{\text{Ben}}$. 

Moreover, 
\begin{equation*}
t_{\text{Anne}} ((0,0),\cdot) \notin \overline{\cone (\{t_{\text{Ben}} (\o,\cdot) \colon \o \in \O\} - \{t_{\text{Anne}} (\o,\cdot) \colon \o \in \O \setminus \{(0,0)\}\})}^\ast \, .
\end{equation*}
		
Therefore, the players' beliefs are not strongly consistent here. 
\end{example}

Mathematically, in Examples \ref{exp:Feinbergv1} and \ref{exp:Feinbergv2}, we consider two pairs of closed convex sets—indeed, one set is common to both pairs—such that the intersections of the pairs coincide, and the relative interior of each closed convex set is empty. However, in one case the two sets cannot be properly separated, while in the other they can.

From an economic perspective, these examples show that even when two players share exactly the same prior (a common prior) — namely, when the intersections of their prior sets coincide and consist solely of the uniform distribution — their beliefs may differ in terms of strong consistency. In Example \ref{exp:Feinbergv1} the players’ beliefs are strongly consistent, whereas in Example \ref{exp:Feinbergv2} they are not. 

\section{Conclusions}

This paper generalises the strong consistency of beliefs (common prior with full support)–acceptable bet epistemic–behavioural dual characterisation of \citet{Morris1994} to the infinite state space setting. The resulting theorem offers new insights to the interpretation of the ex-ante stage and the notion of a prior.

The findings suggest that, in general, the ex-ante stage is best viewed as a non-primitive element of the model. Accordingly, the epistemic counterpart of the absence of arbitrage (an acceptable bet) should be interpreted as the (strong) consistency of players’ beliefs, rather than as a real probability distribution—namely, a common prior (with full support).  

\appendix\label{app}	

\section{Notions and Notations}

A measurable space is a tuple $(X,\A)$, where $X$ is a set and $\A$ is a $\sigma$-field, that is, $\A$ is closed under countable union and complement, and it contains the empty set. $\Delta (X,\A)$ denotes the set of probability measures over the measurable space $(X,\A)$. A probability measure is a non-negative, $\sigma$-additive set function which assigns $1$ to the whole set $X$. 

A function $f \colon X \to \R$ is measurable if  $f^{-1} (A) \in \A$ for every Borel set $A \in B(\R)$. A function $f \colon X \to \R$ is bounded if there exists $B \in \R$  such that $| f (x) | \leq B$ for all $x \in X$. Notice that for every bounded $\A$-measurable function $f$, and probability measure $\mu \in \Delta(X,\A)$ the integral $\int f \dd \mu$ exists (well-defined). Let $B(X,\A)$ denote the set of bounded $\A$-measurable functions. We consider the dual pair $(B(X,\A),\cba(X,\A))$, where $\cba(X,\A)$ denotes the set of bounded measures over $(X,\A)$.

The weak* topology on $\cba (X,\A)$ is defined as follows: for any $\mu \in \cba (X,\mathcal{A})$, $\varepsilon > 0$, and $F \subseteq B (X,\mathcal{A})$ such that $| F | < \infty$, denote by 

\begin{equation*}
\mathcal{O}^\ast (\mu,F,\varepsilon) = \left\{ \nu \in \cba (X,\mathcal{A}) \colon \left| \int f \dd \nu - \int f \dd \mu \right| < \varepsilon, \  \forall f \in F \right\},
\end{equation*} 
	
\noindent a neighbourhood of $\mu$. The collection of such neighbourhoods forms a subbase for the weak* topology on $\cba (X,\mathcal{A})$. $\overline{A}^\ast$ denotes the weak* closure of the set $A \subseteq \cba (X,\mathcal{A})$.
	
\section{Separation}
	
Given a dual pair $(X,X')$, let $^\ast$ denote the weak topology on $X$ (see e.g. Definition 5.90 on p. 211 in \citet{AliprantisBorder2006}). %Given a topology over the vector space $X$ the set of continuous linear functionals over $X$ is denoted by $X^\ast$.
	
\begin{lemma}\label{lem:appkell}
Let $X$ be a vector space, $E_1,E_2 \subseteq X$, and $f \in X'$ be such that $f(x) \geq 0 \geq f (y)$, for all $x \in\overline{\conv (E_1)}^\ast$, $y \in \overline{\conv (E_2)}^\ast$. If $x^\ast \in E_1$ is such that $x^\ast \in \overline{\cone (E_2 \cup - (E_1 \setminus \{ x^\ast \}))}^\ast$ then $f (x^\ast) = 0$.
\end{lemma}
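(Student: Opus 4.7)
The plan is to exploit the hypothesis that $f$ weakly separates $\overline{\conv(E_1)}^\ast$ and $\overline{\conv(E_2)}^\ast$ to show that $f(x^\ast) \leq 0$, and then combine this with the opposite inequality that already follows from $x^\ast \in E_1$. First I would record the easy direction: since $x^\ast \in E_1 \subseteq \overline{\conv(E_1)}^\ast$, the separation hypothesis immediately gives $f(x^\ast) \geq 0$. The task is then to establish $f(x^\ast) \leq 0$.

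For the other inequality, I would unpack the assumption $x^\ast \in \overline{\cone(E_2 \cup -(E_1 \setminus \{x^\ast\}))}^\ast$. By definition, there exists a net $(w_\lambda)$ in $\cone(E_2 \cup -(E_1 \setminus \{x^\ast\}))$ converging to $x^\ast$ in the weak* topology on $X$. Each $w_\lambda$ is a finite non-negative combination
\begin{equation*}
w_\lambda = \sum_{i} \alpha_{\lambda,i}\, y_{\lambda,i} \;-\; \sum_{j} \beta_{\lambda,j}\, z_{\lambda,j}
\end{equation*}
with $\alpha_{\lambda,i},\beta_{\lambda,j}\geq 0$, $y_{\lambda,i}\in E_2$, and $z_{\lambda,j}\in E_1\setminus\{x^\ast\}$.

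Now I would apply $f$ termwise. Since $y_{\lambda,i}\in E_2\subseteq \overline{\conv(E_2)}^\ast$ we have $f(y_{\lambda,i})\leq 0$, and since $z_{\lambda,j}\in E_1\subseteq \overline{\conv(E_1)}^\ast$ we have $f(z_{\lambda,j})\geq 0$. Consequently $f(w_\lambda)=\sum_i \alpha_{\lambda,i}f(y_{\lambda,i})-\sum_j\beta_{\lambda,j}f(z_{\lambda,j})\leq 0$ for every index $\lambda$. Because $f\in X^\ast$ and the weak* topology on $X$ is the coarsest topology making every element of $X^\ast$ continuous, $f$ is weak*-continuous, so passing to the limit yields $f(x^\ast)\leq 0$. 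Combining both inequalities gives $f(x^\ast)=0$.

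There is no real obstacle here beyond being careful about the definition of the weak* topology on $X$ (as opposed to the more familiar weak* topology on a dual space) and the observation that membership of $f$ in $X^\ast$ is precisely what guarantees weak*-continuity and hence permits the limit step; once that is in place, the argument is a direct sign-chase using the separation hypothesis on both $E_1$ and $E_2$.
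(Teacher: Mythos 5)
Your proof is correct and follows essentially the same route as the paper's: both take a net in $\cone(E_2 \cup -(E_1\setminus\{x^\ast\}))$ converging weak* to $x^\ast$, use the sign hypotheses on $E_1$ and $E_2$ to see that $f$ is non-positive on every element of that cone, and pass to the limit via weak*-continuity of $f$ (the paper just writes out the $\varepsilon$-estimate instead of invoking continuity by name), before combining with $f(x^\ast)\geq 0$ from $x^\ast\in E_1$. No substantive difference.
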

	
\begin{proof}
Let $(x_i)_{i \in I} \subseteq \cone (E_2 \cup - (E_1 \setminus \{x^\ast\}))$ be such that $x^\ast = \lim_{i \in I} x_i$ (in the weak topology). Then for every $\varepsilon > 0$ there exists $i ^\ast \in I$ such that 
		
\begin{equation*}
\left|   f (x^\ast) -  f ( x_i) \right| < \varepsilon \, ,
\end{equation*}
		
\noindent for all $i \geq i^\ast$. 
		
Take $x_i = \sum_{k=1}^m \alpha_k e_k - \sum_{l = 1}^n \beta_l e'_l$, where $i \geq i^\ast$, $\alpha_1,\ldots,\alpha_m,\beta_1,\ldots,\beta_n > 0$, $e_1,\ldots,e_m \in E_2$, $e'_1,\ldots,e'_n \in E_1$. Then
		
\begin{equation*}
\left|  f (x^\ast) + f ( \sum_{l = 1}^n \beta_l e'_l) -  f ( \sum_{k=1}^m \alpha_k e_k ) \right| < \varepsilon \, .
\end{equation*}
		
Since $f (  \sum_{l = 1}^n \beta_l e'_l), -f ( \sum_{k=1}^m \alpha_k e_k ) \geq 0$ we have that  
\begin{equation*}
| f (x^\ast ) | < \varepsilon \, .
\end{equation*}
\noindent Since $\varepsilon > 0$ was arbitrarily chosen and $f ( x^\ast ) \geq 0$, we have that $f ( x^\ast ) = 0$.
\end{proof}
	
\begin{theorem}\label{thm:sep}
Let $X$ be a vector space, $E_1,E_2 \subseteq X$. Then the sets $\overline{\cone (E_1)}^\ast$ and $\overline{\cone (E_2)}^\ast$ are properly separable if and only if there exists $x^\ast \in E_1$ such that $x^\ast \notin \overline{\cone (E_2 \cup - (E_1 \setminus \{x^\ast\}))}^\ast$ or there exists $x^\ast \in E_2$ such that $x^\ast \notin \overline{\cone (E_1 \cup - (E_2 \setminus \{x^\ast\}))}^\ast$.
\end{theorem}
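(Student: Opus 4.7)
The plan is to prove the two directions separately, using a standard strong separation theorem in locally convex topological vector spaces for the ``if'' direction and Lemma \ref{lem:appkell} (together with its evident symmetric counterpart) for the ``only if'' direction. The weak* topology on $X$ induced by $X^\ast$ is locally convex and its continuous dual equals $X^\ast$, so separation of a point from a closed convex set by an element of $X^\ast$ is available throughout.

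For the ``if'' direction, suppose there exists $x^\ast \in E_1$ with $x^\ast \notin C := \overline{\cone(E_2 \cup -(E_1 \setminus \{x^\ast\}))}^\ast$. Note that $C$ is a closed convex cone, so strong separation yields $f \in X^\ast$ with $f(x^\ast) > 0$ and $f(y) \le 0$ for all $y \in C$ (the $\le 0$ bound on the cone follows by rescaling). For every $e_2 \in E_2$ we get $f(e_2) \le 0$, hence $f \le 0$ on $\overline{\cone(E_2)}^\ast$; for every $e_1 \in E_1 \setminus \{x^\ast\}$ we get $f(-e_1) \le 0$, i.e.\ $f(e_1) \ge 0$, and together with $f(x^\ast) > 0$ this gives $f \ge 0$ on $\overline{\cone(E_1)}^\ast$. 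Since $f(x^\ast) > 0$, the separation is proper. The second branch of the hypothesis is handled by interchanging the roles of $E_1$ and $E_2$.

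For the ``only if'' direction, assume the two cones are properly separated by some $f \in X^\ast$, with (without loss of generality) $f \ge 0$ on $\overline{\cone(E_1)}^\ast$ and $f \le 0$ on $\overline{\cone(E_2)}^\ast$; proper separation means $f$ is not identically zero on the union. If there exists $x^\ast \in E_1$ with $f(x^\ast) > 0$, then the hypotheses of Lemma \ref{lem:appkell} are met (in particular $f \ge 0$ on $\overline{\conv(E_1)}^\ast$ and $f \le 0$ on $\overline{\conv(E_2)}^\ast$), and its contrapositive forces $x^\ast \notin \overline{\cone(E_2 \cup -(E_1 \setminus \{x^\ast\}))}^\ast$, which is what we want. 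Otherwise $f$ vanishes on $E_1$ and hence, by weak* continuity and linearity, on $\overline{\cone(E_1)}^\ast$; proper separation then forces some $x^\ast \in E_2$ with $f(x^\ast) < 0$, and applying Lemma \ref{lem:appkell} with $E_1$ and $E_2$ swapped and $f$ replaced by $-f$ yields $x^\ast \notin \overline{\cone(E_1 \cup -(E_2 \setminus \{x^\ast\}))}^\ast$.

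The main obstacle is conceptual rather than computational: one must translate ``proper separation of the two closed convex cones'' into the precise statement that some $f \in X^\ast$ satisfies the sign conditions above and is not identically zero on $E_1 \cup E_2$, so that the case split in the ``only if'' direction is exhaustive. Once this unpacking is done, Lemma \ref{lem:appkell} supplies the nontrivial content in one direction and the strong separation theorem applied to a point and a closed convex cone supplies it in the other; no further delicate weak* approximation arguments beyond those already internalised in Lemma \ref{lem:appkell} are needed.
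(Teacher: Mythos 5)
Your proposal is correct and follows essentially the same route as the paper: the ``if'' direction via strong separation of $x^\ast$ from the closed convex cone $\overline{\cone(E_2 \cup -(E_1\setminus\{x^\ast\}))}^\ast$, and the ``only if'' direction via Lemma \ref{lem:appkell} after normalising the separating constant to $0$ (which the paper justifies by noting $0$ lies in both cones). Your explicit case split on whether the non-vanishing point of $f$ lies over $E_1$ or $E_2$ is just an unpacking of the paper's ``without loss of generality''; there is no substantive difference.
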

	
\begin{proof}
%We give the proof for the case $x^\ast \in E_1$. The proof of the other case goes analogously. Moreover, n
Notice that $X$ equipped with the weak topology is a locally convex topological vector space.
		
If: Suppose that  there exists $x^\ast \in E_1$ such that $x^\ast \notin \overline{\cone (E_2 \cup - (E_1 \setminus \{x^\ast\}))}^\ast$. Then $x^\ast$ and $\overline{\cone (E_2 \cup - (E_1 \setminus \{x^\ast\}))}^\ast$ are strongly separable (see Corollary 5.84 on. p. 209 in \citet{AliprantisBorder2006}), that is, there exists $g \in X'$ such that $g (x^\ast) > 0$ and $g (x) \leq 0$, for all $x \in \overline{\cone (E_2 \cup - (E_1 \setminus \{x^\ast\}))}^\ast$.
		
Then for each $e \in E_1 \setminus \{x^\ast\}$ it holds that $-e \in \overline{\cone (E_2 \cup - (E_1 \setminus \{x^\ast\}))}^\ast$, hence $g (e) \geq 0$. Moreover, for each $e \in E_2$ it holds that $e \in \overline{\cone (E_2 \cup - (E_1 \setminus \{x^\ast\}))}^\ast$, hence $g (e) \leq 0$. Therefore, $g (e) \geq 0 \geq g (e')$, for all $e \in E_1$, $e' \in E_2$, and $g (x^\ast) > 0$.
		
\bigskip
		
Only if:   Let $g \in X'$ be such that $g (e) \geq c \geq g (e')$, for all $e \in \overline{\cone (E_1)}^\ast$, $e' \in \overline{\cone (E_2)}^\ast$, and w.l.o.g. we can assume that there exists $x^\ast \in \overline{\cone (E_1)}^\ast$ such that $g (x^\ast) > c$. Then by that $0 \in \cone (E_1) \cap \cone (E_2)$ we have that $c=0$, that is, $g (e) \geq 0 \geq g (e')$, for all $e \in \overline{\cone (E_1)}^\ast$, $e' \in \overline{\cone (E_2)}^\ast$. 
		
Indirectly assume that for every $e \in E_1$ it holds that $e \in \overline{\cone (E_2 \cup - (E_1 \setminus \{e\}))}^\ast$. Then by Lemma \ref{lem:appkell} $g (e) = 0$, for all $e \in E_1$. Therefore, $g (x) = 0$, for all $x \in \overline{\cone (E_1)}^\ast$, which is a contradiction.
\end{proof}


\begin{thebibliography}{18}
\providecommand{\natexlab}[1]{#1}
\providecommand{\url}[1]{{#1}}
\providecommand{\urlprefix}{URL }
\expandafter\ifx\csname urlstyle\endcsname\relax
  \providecommand{\doi}[1]{DOI~\discretionary{}{}{}#1}\else
  \providecommand{\doi}{DOI~\discretionary{}{}{}\begingroup
  \urlstyle{rm}\Url}\fi
\providecommand{\eprint}[2][]{\url{#2}}

\bibitem[{Aliprantis and Border(2006)}]{AliprantisBorder2006}
Aliprantis CD, Border KC (2006) {Infinite Dimensional Analysis, Third Edition}.
  Springer-Verlag

\bibitem[{Aumann(1976)}]{Aumann1976}
Aumann RJ (1976) {Agreeing to Disagree}. The Annals of Statistics
  4(6):1236--1239

\bibitem[{Aumann(1998)}]{Aumann1998}
Aumann RJ (1998) {Common Priors: A Reply to Gul}. Econometrica 66(4):929--938

\bibitem[{Brandenburger and Dekel(1993)}]{BrandenburgerDekel1993}
Brandenburger A, Dekel E (1993) {Hierarchies of beliefs and common knowledge}.
  Journal of Economic Theory 59:189--198

\bibitem[{Dunford and Schwartz(1958)}]{DunfordSchwartz1958}
Dunford N, Schwartz JT (1958) {Linear Operators, Part I: General Theory}.
  Wiley-Interscience

\bibitem[{Feinberg(2000)}]{Feinberg2000}
Feinberg Y (2000) {Characterizing Common Priors in the Form of Posteriors}.
  Journal of Economic Theory 91:127--179

\bibitem[{Frick et~al(2023)Frick, Iijima, and lshii}]{FrickIijimalshii2023}
Frick M, Iijima R, lshii Y (2023) Learning efficiency of multiagent information
  structures. Journal of Political Economy 131(12):3377--3414

\bibitem[{Gul(1998)}]{Gul1998}
Gul F (1998) {A Comment on Aumann's Bayesian View}. Econometrica 66(4):923--927

\bibitem[{Heifetz(1993)}]{Heifetz1993}
Heifetz A (1993) {The bayesian formulation of incomplete information - the
  non-compact case}. International Journal of Game Theory 21:329--338

\bibitem[{Heifetz and Samet(1998)}]{HeifetzSamet1998}
Heifetz A, Samet D (1998) {Topology-free typology of beliefs}. Journal of
  Economic Theory 82:324--341

\bibitem[{Hellman and Pint{\'e}r(2022)}]{HellmanPinter2022}
Hellman Z, Pint{\'e}r M (2022) Charges and bets: a general characterisation of
  common priors. International Journal of Game Theory 51(3):567--587

\bibitem[{Lehrer and Samet(2014)}]{LehrerSamet2014}
Lehrer E, Samet D (2014) {Belief consistency and trade consistency}. Games and
  Economic Behavior 83:165--177

\bibitem[{Mertens and Zamir(1985)}]{MertensZamir1985}
Mertens JF, Zamir S (1985) {Formulation of Bayesian analysis for games with
  incomplete information}. International Journal of Game Theory 14:1--29

\bibitem[{Morris(1991)}]{Morris1991}
Morris S (1991) The role of beliefs in economic theory. PhD thesis, Yale
  University

\bibitem[{Morris(1994)}]{Morris1994}
Morris S (1994) Trade with heterogeneous prior beliefs and asymmetric
  information. Econometrica 62(6):1327--1347

\bibitem[{Morris(2020)}]{Morris2020}
Morris S (2020) No trade and feasible joint posterior beliefs

\bibitem[{Morris et~al(2024)Morris, Oyama, and
  Takahashi}]{MorrisOyamaTakahashi2024}
Morris S, Oyama D, Takahashi S (2024) Implementation via information design in
  binary-action supermodular games. Econometrica 93:775--813

\bibitem[{Samet(1998)}]{Samet1998}
Samet D (1998) {Common Priors and Separation of Convex Sets}. Games and
  Economic Behavior 24:172--174

\end{thebibliography}
\end{document}